\definecolor{light-gray}{gray}{0.9}
	\newtheorem{theorem}{Theorem}%
	\newtheorem{corollary}{Corollary}%
	\newtheorem{example}{Example}
	\newcommand\eat[1]{}
	\newlength{\wordlength}
	\newcommand{\set}[1]{\{#1\}}
	\newcommand{\midd}{\mathbin{:}}
	\newcommand{\eqclass}[2][]{\ifthenelse{\equal{#1}{}}{[#2]}{[#2]_{\sim_{#1}}}}
	\newcommand{\pref}{\succsim\xspace}
		\newcommand{\spref}{\succ\xspace}
\newcommand{\nbh}[1][]{
	\ifthenelse{\equal{#1}{}}{\nu}{\nu(#1)}
}
\newcommand{\cstr}[1][]{
	\ifthenelse{\equal{#1}{}}{\mathscr S}{\cstr(#1)}
}
\newcommand{\choice}[1][]{
	\ifthenelse{\equal{#1}{}}{\mathit{C}}{\choice(#1)}

		\newcommand{\ml}[1][]{\ensuremath{\ifthenelse{\equal{#1}{}}{\mathit{ML}}{\mathit{ML}(#1)}}\xspace}
		\newcommand{\sml}[1][]{\ensuremath{\ifthenelse{\equal{#1}{}}{\mathit{SML}}{\mathit{SML}(#1)}}\xspace}
		\newcommand{\sd}[1][]{\ensuremath{\ifthenelse{\equal{#1}{}}{\mathit{SD}}{\mathit{SD}(#1)}}\xspace}
		\newcommand{\rsd}[1][]{\ensuremath{\ifthenelse{\equal{#1}{}}{\mathit{RSD}}{\mathit{RSD}(#1)}}\xspace}
		\newcommand{\rd}[1][]{\ensuremath{\ifthenelse{\equal{#1}{}}{\mathit{RD}}{\mathit{RD}(#1)}}\xspace}
		\newcommand{\st}[1][]{\ensuremath{\ifthenelse{\equal{#1}{}}{\mathit{ST}}{\mathit{ST}(#1)}}\xspace}
		\newcommand{\bd}[1][]{\ensuremath{\ifthenelse{\equal{#1}{}}{\mathit{BD}}{\mathit{BD}(#1)}}\xspace}
		\newcommand{\pc}[1][]{\ensuremath{\ifthenelse{\equal{#1}{}}{\mathit{PC}}{\mathit{PC}(#1)}}\xspace}
		\newcommand{\dl}[1][]{\ensuremath{\ifthenelse{\equal{#1}{}}{\mathit{DL}}{\mathit{DL}(#1)}}\xspace}
		\newcommand{\ul}[1][]{\ensuremath{\ifthenelse{\equal{#1}{}}{\mathit{UL}}{\mathit{UL}(#1)}}\xspace}

			\newcommand{\indiff}{\ensuremath{\sim}}}
\begin{document}

		\title{An Impossibility Result for Housing Markets with Fractional Endowments}



	\author{Haris Aziz\corref{cor1}} \ead{haris.aziz@unsw.edu.au}
		\address{UNSW Sydney and Data61 (CSIRO), Australia}



	\begin{keyword}
	 House allocation\sep 
	Housing markets\sep
	Core\sep
	Top Trading Cycles\sep 
		
		\emph{JEL}: C62, C63, and C78
	\end{keyword}

	\begin{abstract}
The housing market setting constitutes a fundamental model of exchange economies of goods. Most of the work concerning housing markets does not cater for randomized assignments or allocation of time-shares. House allocation with fractional endowments of houses was considered by Athanassoglou and Sethuraman (2011) who posed the open problem whether individual rationality, weak strategyproofness, and efficiency are compatible for the setting. We show that the three axioms are incompatible. 
	\end{abstract}

	\maketitle


		\sloppy

		\section{Introduction}


	
The housing market is a fundamental model of exchange economies of goods. It has been used to model online barter markets and nation-wide kidney markets~\citep{RSU07a,SoUn10a}. 
		The housing market (also called the \emph{Shapley-Scarf} market) 
		consists of a set of agents each of whom owns a house and has preferences over the set of houses. The goal is to redistribute the houses among the agents in an efficient and stable manner.
The desirable properties include the following ones: \emph{Pareto optimality} (there exists no other assignment which each agent weakly prefers and at least one agent strictly prefers); \emph{individual rationality} (the resultant allocation is at least as preferred by each agent as his endowment); and \emph{core stability} (there exists no subset of agents who could have redistributed their endowments among themselves so as to get a more preferred outcome than the resultant assignment). 		
		
\citet{ShSc74a} showed that for housing markets with strict preferences, an elegant mechanism called \emph{Gale's Top Trading Cycles (TTC)} (that is based on multi-way exchanges of houses between agents) is strategyproof and finds an allocation that is in the core~\citep{ShSc74a,Ma94a}.\footnote{The seminal paper of \citet{ShSc74a} was referenced prominently in the scientific background document of the \emph{2012 Sveriges Riksbank Prize in Economic Sciences in Memory of Alfred Nobel} given to Lloyd Shapley and Alvin Roth.} Along with the Deferred Acceptance Algorithm, TTC has provided the foundations for many of the developments in matching market design~\citep{Manl13a,SoUn10a}. The Shapley-Scarf market has been used to model important real-world problems for allocation of human organs and seats at public schools.
Since the formalization of TTC, considerable work has been done to extend and generalize TTC for more general domains that allow indifference in preferences~\citep{AlMo11a,JaMa12a,AzKe12a,Plax13a,SeSa13a} or multiple units in endowment~\citep{FLS+15a,KQW01a,STSY14a,TSY14a}.

		%
		%
		%
		%
		%
		%
		%

		Despite recent progress on house allocation and housing market mechanisms, the general assumption has remained that agents cannot own or be allocated fractions of houses. The disadvantage of this assumption is that it does not model various cases where agents have fractional endowments or when agents can share houses.  This is especially the case when agents have the right to use different facilities for different fractions of the time and fractional allocation of resources is helpful in obtaining more equitable outcomes. Allowing for exchanges of fractions of services can also be helpful in modeling time-bank models in which agents performs services in order to receive other services for the same time duration. 
Fractional allocation of houses can also be interpreted as the relative right of an agent over a house~\citep{AtSe11a}. 
Finally, fractional allocations can be used to model randomized allocation of indivisible resources where agents exchange probabilities of getting particular houses.
Hence allocation of houses under fractional endowments generalizes a number of well-studied house allocation models.

\citet{AtSe11a} considered housing markets with fractional endowments and presented a desirable mechanism for the setting. They posed the following open question regarding housing markets with fractional endowments: 
\emph{``a natural question to ask is whether there exists a mechanism that is individually rational, ordinally efficient, and weakly strategyproof.''}
In this paper, we answer the open question posed by \citet{AtSe11a} in the negative by presenting an impossibility result. The result implies a number of results in the literature including theorems in \citep{AtSe11a,Yilm10a}. The result also shows that as soon the Shapley-Scarf housing market allows for fractional endowments, the compatibility of individual rationality, Pareto optimality, and strategyproofness disappears.

		\section{Preliminaries}

		\subsection{Model}

		Consider a market with a set of agents $N=\{1,\ldots, n\}$ and a set of houses $H=\{h_1,\ldots, h_m\}$. Each agent has complete and transitive preferences $\pref_i$ over the houses and $\pref=(\pref_1,\ldots, \pref_n)$ is the preference profile of the agents.  
		Agents may be indifferent among houses.
			We denote $\pref_i: E_i^1,\dots,E_i^{k_i}$ for each agent $i$ with equivalence classes in decreasing order of preference.
			Thus, each set $E_i^j$ is a maximal equivalence class of houses among which agent $i$ is indifferent, and $k_i$ is the number of equivalence classes of agent $i$.
		An agent has \emph{dichotomous preferences} if he considers each house as either acceptable or unacceptable and is completely indifferent between unacceptable houses and also indifferent between acceptable houses. 

		Each agent $i$ is endowed with an allocation $e(i)$ where there are
		$e(i)(h_j)$ units of house $h_j$ given to agent $i$. 
		The quadruple $(N,H,\pref,e)$ is an instance of a  \emph{housing market with fractional endowments}. 
		Note that in the basic housing market, each agent is endowed with and is allocated one house and the endowments are discrete: $n=m$, $e(i)(h_j)\in \{0,1\}$ and $\sum_{h\in H}e(i)(h)=1$ for all $i\in N$ and $\sum_{i\in N}e(i)(h)=1$ for all $h\in H$. 
		When  allocations are discrete we will also abuse notation and denote $e(i)$ as a set.

		A \emph{fractional assignment} is an $n\times m$ matrix $[x(i)(h_j)]_{\substack{1\leq i\leq n, 1\leq j\leq m}}$ such that for all $i\in N$, and $h\in H$, $\sum_{i\in N}x(i)(h)=\sum_{i\in N}e(i)(h)$. 
		The value $x(i)(h_j)$ is the fraction or units of house $h_j$ that agent $i$ gets. We will use fraction or unit interchangeably since we do not assume that exactly one unit of each house is in the market. 
		Each row $x(i)=(x(i)(h_1),\ldots, x(i)(h_m))$ represents the \emph{allocation} of agent $i$.
		Given two allocations $x(i)$ and $x(j)$, $x(i)+x(j)$ is the point-wise sum of the allocations $x(i)$ and $x(j)$. If $\sum_{i\in N}x(i)(h)=1$ for each $h\in H$, a fractional assignment can also be interpreted as a random assignment where $x(i)(h_j)$ is the probability of agent $i$ getting house $h_j$.
		Note that endowment $e$ itself can be considered as the initial assignment of houses to the agents with $e(i)$ being the initial allocation of agent $i\in N$.
		A \emph{fractional housing market} mechanism is a function that takes as input $(N,H,\pref,e)$ and returns an assignment or vector of allocations $(x(1),\ldots, x(n))$ such that $\sum_{i\in N}x(i)=\sum_{i\in N}e(i)$.
		We do not require in general that $\sum_{i\in N}e(i)(h)$ or $\sum_{h\in H}e(i)(h)$  are integers.

%
%
%
%
%
%
%
%
%
%
		
				\begin{example}[Illustration of a housing market with fractional endowments]
				Consider the following housing market $(N,H,\succsim ,e)$ where $N=\{1,2,3\}$ and $H=\{a,b,c\}$. 
				The endowment assignment and the preferences are as follows:

						\[
						e=\begin{pmatrix}
						0&0.99&0.01\\
						0.99&0&0.01\\
						 0.01&0.01&0.98\\
						   		 \end{pmatrix}.
						\]
		
						\begin{align*}
							1:&\quad a,c,b&
							2:&\quad b,a,c&
							3:&\quad b,a,c
							\end{align*}
							The endowment assignment specifies that agent $1$ owns 0.99 fraction of $b$ and 0.01 fraction of $c$. 
							
				\end{example}

		%
		%
		%
		%
		%
		%
		%
		%

		%
		%
		%

		\subsection{Properties of allocations and mechanisms}


		Before defining various stability and efficiency properties, we need to reason about agents' preferences over allocations. A standard method to compare random allocations is to use the \emph{SD (stochastic dominance)} relation. 
					%
					 Given two random assignments $x$ and $y$, $x(i) \succsim_i^{SD} y(i)$ i.e.,  an agent  $i$ \emph{SD~prefers} allocation $x(i)$ to $y(i)$ if
\[\forall h\in H: \textstyle \sum_{h_j\in \set{h_k\midd h_k\pref_i h}}x(i)(h_j) \ge \sum_{h_j\in \set{h_k\midd h_k\pref_i h}}y(i)(h_j).\] 


					
					The SD relation is not complete in general. 
We define normative properties of allocations as well as mechanisms.

		\begin{itemize}
		\item \emph{$SD$-efficiency}: an assignment $x$ is\emph{ $SD$-efficient} if there exists no other assignment $y$ such that $y(i)\succsim_i^{SD} x(i)$ for all $i\in N$ and $y(i)\succ_i^{SD} x(i)$ for some $i\in N$.
		\item \emph{$SD$ individually rational (SD-IR)}: an assignment $x$ is \emph{$SD$-individually rational} if $x(i)\succsim_i^{SD} e(i)$.
		\item A mechanism~$f$ is ${SD}$-\emph{manipulable} iff there exists an agent $i \in N$ and preference profiles $\pref$ and $\pref'$ with $\pref_j=\pref_j'$ for all $j \neq i$ such that $f(\pref')$ 
		$\spref_i^{SD} f(\pref)$. 
		A mechanism is \emph{weakly} ${SD}$-\emph{strategyproof} iff it is not ${SD}$-manipulable.
		\end{itemize}

		\section{The result}

%
%
%
%
%
%
%
%
%

We show that there does not exist an SD-efficient, SD-individually rational and weak SD-strategyproof mechanism. The theorem below answers a question raised by \citet{AtSe11a}.

		\begin{theorem}\label{th:imposs}
			There does not exist a weak SD-strategyproof, SD-efficient and SD-individually rational fractional housing market mechanism even for single unit allocations and endowments, and for strict preferences. 
			\end{theorem}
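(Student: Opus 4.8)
The plan is to argue by contradiction: suppose a mechanism $f$ is simultaneously weak SD-strategyproof, SD-efficient, and SD-individually rational, and exhibit a single-unit instance (each house in unit supply, each agent endowed with one unit in total, as in the running examples) on which $f$ is necessarily SD-manipulable. First I would fix a small market $(N,H,\succsim,e)$ with $n=m=3$ and choose the profile $\succsim$ so that the interaction of SD-efficiency and SD-IR is as tight as possible. The two levers are: (a) the cumulative lower bounds from SD-IR, $\sum_{h'\succsim_i h}x(i)(h')\ge\sum_{h'\succsim_i h}e(i)(h')$ for every $h$, which, when an agent's endowed mass sits high in its order, pins several entries of $x(i)$ to $0$ and collapses the feasible region to a low-dimensional family; and (b) the acyclicity characterization of SD-efficiency due to \citet{BoMo01a}, applied to the induced relation on houses, which decides exactly which members of that family survive efficiency. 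Together these describe the set of assignments $f$ is permitted to return, which I will denote $\Phi(\succsim)$.

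Next I would single out one agent $i$ and a misreport $\succsim_i'$, and rerun the same characterization on $\succsim'=(\succsim_i',\succsim_{-i})$ to obtain $\Phi(\succsim')$. The core of the argument is to couple the two instances through weak SD-strategyproofness used in \emph{both} directions: reading $\succsim_i$ as truthful forbids $f(\succsim')(i)\succ_i^{SD}f(\succsim)(i)$ (evaluated under $\succsim_i$), while reading $\succsim_i'$ as truthful forbids $f(\succsim)(i)\succ_i^{SD}f(\succsim')(i)$ (evaluated under $\succsim_i'$). I would engineer $\succsim_i'$ to \emph{demote} agent $i$'s endowed house, which relaxes its SD-IR constraint and lets SD-efficiency push additional mass onto $i$'s top SD-classes; the target is that every SD-efficient, SD-IR assignment on $\succsim'$ gives agent $i$ strictly more cumulative weight on its most-preferred classes (under the \emph{true} $\succsim_i$) than any assignment in $\Phi(\succsim)$ does. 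When that separation holds, whatever pair of outcomes $f$ selects from $\Phi(\succsim)$ and $\Phi(\succsim')$, one of the two forbidden strict SD-comparisons is forced, contradicting weak SD-strategyproofness.

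The step I expect to be the genuine obstacle is the set-valuedness of the SD-efficient, SD-IR correspondence. Because conflicting preferences typically make an entire interval of fractional assignments SD-efficient and SD-IR, $f$ may return any member of $\Phi(\succsim)$ or $\Phi(\succsim')$, so I cannot reason about a single forced outcome as in a uniqueness-based impossibility; the contradiction must hold for \emph{every} admissible selection. This forces the instance to be built so that the projections of $\Phi(\succsim')$ and $\Phi(\succsim)$ onto agent $i$'s allocation are strictly SD-separated with respect to $\succsim_i$, and simultaneously that no deterministic assignment slips in through the acyclicity test to hand $f$ an escape on either profile. Verifying this separation is where the careful case analysis concentrates. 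I would close by observing that the example uses only single-unit endowments, so the impossibility already bites on that restricted domain, and by checking that it recovers the earlier impossibilities of \citet{AtSe11a} and \citet{Yilm10a} as special cases.
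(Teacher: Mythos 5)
Your high-level architecture is the right one --- the theorem is proved by exhibiting a profile on which the set of SD-efficient, SD-IR outcomes is multi-valued and showing that \emph{every} selection from it is manipulable --- and you have correctly identified the set-valuedness of the correspondence as the crux. But what you have written is a plan, not a proof, and two of its specific commitments look fatal. First, the instance: you assert $n=m=3$ will do, but give no profile. The paper's construction needs five agents and five houses: agents $3$, $4$ and $5$ act as anchors whose SD-IR constraints (each owns $1/2$ of two houses sitting high in their own orders) force them to receive exactly their endowments under any admissible report of agents $1$ and $2$, and only then does the residual market reduce to a two-agent exchange of $1/2$ units of $h_1$ and $h_2$ whose admissible outcomes form the multi-valued family $x$, $y$, $z_\lambda$. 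With $n=m=3$ and single-unit supply there is too little room to build such anchors while leaving a genuinely multi-valued residual problem; at minimum you owe an explicit $3\times 3$ profile, and I doubt one exists (recall that for discrete single endowments and strict preferences TTC already satisfies all three properties, so the counterexample must exploit fractional endowments in an essential way, which your two-sentence description of the instance does not yet do).

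Second, your separation target is both stronger than needed and stronger than what the known example delivers. You want one fixed agent $i$ and one misreport $\succsim_i'$ such that every element of $\Phi(\succsim')$ strictly SD-dominates, for $i$ under the true $\succsim_i$, every element of $\Phi(\succsim)$; if you had that, a single one-directional application of weak SD-strategyproofness would already finish, so your ``both directions'' coupling is doing no work. The paper instead performs a case analysis on which member of $\{x,y,z_\lambda\}$ the mechanism returns at the truthful profile: if it returns $x$, agent $2$ misreports (swapping $h_2$ and $h_3$) to a profile whose admissible set is the \emph{singleton} $\{y\}$; if it returns $y$ or any $z_\lambda$, agent $1$ misreports (swapping $h_1$ and $h_3$) to a profile whose admissible set is the singleton $\{x\}$. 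The manipulator and the misreport depend on the selection, and it is the uniqueness of the admissible outcome at the \emph{deviated} profile --- not a uniform separation between two multi-valued sets --- that closes the argument. Unless you can actually produce an instance with your uniform separation property, you should restructure along these lines; and in either case the proof is not complete until the concrete profile is written down and the claimed descriptions of $\Phi(\succsim)$ and $\Phi(\succsim')$ are verified.
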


			\begin{proof}
				Consider the housing market $(N,H,\pref,e)$ where $N=\{1,2,3,4,5\}$, $H=\{h_1,h_2,h_3, h_4,h_5\}$, the preference profile $\pref$ is
				\begin{align*}
					\pref_1: &\quad h_3, h_1, h_2,h_4,h_5\\
					\pref_2: &\quad h_5,h_1,h_2,h_3,h_4\\
					\pref_3: &\quad h_1,h_4,h_2,h_3,h_5\\
					\pref_4: &\quad h_2,h_4,h_1,h_3,h_5\\
						\pref_5: &\quad h_5,h_3,h_1,h_2,h_4
					\end{align*}

				and

					\[
					e=\begin{pmatrix}
					1/2&1/2&0&0&0\\
					0&0&1/2&0&1/2\\
						1/2&0&0&1/2&0\\
						0&1/2&0&1/2&0\\
							0&0&1/2&0&1/2\\
					   		 \end{pmatrix}.
					\]

		First we can establish the claim that agents $3, 4, 5$ get exactly their endowment as long as the following conditions hold:

		\begin{itemize}
			\item[A] the allocation is SD-IR.
			\item[B] agents $3,4,$ and $5$ report truthfully.
			\item[C] agent $2$ reports $h_4$ as his least preferred house.
			\item[D] agent $2$ reports $h_5$ as his most preferred house.
			\item[E] agent $1$ expresses $h_4$ as a house less preferred than $h_1,h_2,h_3$.
		\end{itemize}

	We argue for the claim as follows. 
	Suppose the allocation is 

						\[
						a=\begin{pmatrix}
						a_{11}&a_{12}&a_{13}&a_{14}&a_{15}\\
						a_{21}&a_{22}&a_{23}&a_{24}&a_{25}\\
						a_{31}&a_{32}&a_{33}&a_{34}&a_{35}\\
						a_{41}&a_{42}&a_{43}&a_{44}&a_{45}\\
							a_{51}&a_{52}&a_{53}&a_{54}&a_{55}\\
						   		 \end{pmatrix}.
						\]
	Since $h_5$ is the most preferred house of agent 2 (Assumption D) and agent 5 (truthful preference of agent 5 according to Assumption B) who also happen to hold 0.5 each of $h_5$, SD-IR of allocation $a$ implies that $a_{25}=a_{55}=0.5$ and $a_{15}=a_{35}=a_{45}=0$. House $h_4$ is the second most preferred house of agents 3 and 4 and they get 0.5 of their second most preferred house and 0.5 of their most preferred house. If agent 3 or agent 4 get an SD-improvement over their endowment, then they must get less of $h_4$ and more of their most preferred house. However, due to B, E, and the assumption that agent $5$ reports truthfully,  no agent among 1, 2, and 5 can take any fraction of $h_4$ or else SD-IR is violated. Hence,  $a_{34}=a_{44}=0.5$ and  $a_{14}=a_{24}=a_{54}=0$.
Since $h_1$ is the most preferred house of agent $3$ and her allocation of $h_4$ is fixed, it follows from SD-IR that $a_{31}=0.5$. Hence $a(3)=e(3)$. Since $h_2$ is the most preferred house of agent $4$ and her allocation of $h_4$ is fixed, it follows from SD-IR that $a_{42}=0.5$. Hence $a(4)=e(4)$. It remains to be established that $a_{53}=0.5$. Since $a_{55}=0.5$ is already fixed and since $h_3$ is the next most preferred house of agent 5 after house $h_5$, SD-IR implies that $a_{53}=0.5$. Thus $a(5)=e(5)$ and we have proved the claim.

		From now on, we will consider a preference profile in which the conditions above are met so that by SD-IR, we get that agents 3, 4, 5 get exactly their endowments.


		Assuming that agents $3,4,5$ get the same allocation as their endowment, agent $1$ must get $1/2$ of $h_3$ in any SD-efficient assignment. Thus the only SD-individually rational and SD-efficient assignments for profiles satisfying the conditions above:

					\[
					x=\begin{pmatrix}
					1/2&0&1/2&0&0\\
					0&1/2&0&0&1/2\\
					1/2&0&0&1/2&0\\
					0&1/2&0&1/2&0\\
						0&0&1/2&0&1/2\\
					   		 \end{pmatrix},
					\]

and
					\[
					z=\begin{pmatrix}
					\lambda&1/2-\lambda&1/2&0&0\\
					1/2-\lambda&\lambda&0&0&1/2\\
					1/2&0&0&1/2&0\\
					0&1/2&0&1/2&0\\
						0&0&1/2&0&1/2\\
					   		 \end{pmatrix}
					\]

		for $0\leq \lambda < 1/2.$

		If the outcome is assignment $x$, then agent $2$ can report $\pref_2'$:
			\begin{align*}
				\pref_1:&\quad h_3, h_1, h_2,\ldots\\
				\pref_2':&\quad h_5,h_1,h_3,h_2,h_4
				\end{align*}
				The only SD-individually rational and SD-efficient outcome of $(\pref_1,\pref_2',\pref_3,\pref_4,\pref_5)$ is assignment $y$ which is an SD-improvement for agent $2$ over the truthful outcome $x$.


								If the outcome is of type assignment $z$, then  						agent $1$ can report $\pref_1'$:
\begin{align*}
											\pref_1':&\quad h_1, h_3, h_2,\ldots\\
											\pref_2:&\quad h_5,h_1,h_2,h_3,h_4
											\end{align*}
											The only SD-individually rational and SD-efficient outcome of $(\pref_1',\pref_2,\pref_3,\pref_4,\pref_5)$ is assignment $x$ which is an SD-improvement for agent $1$ over the truthful outcome $z$.
 						\end{proof}

					We also get as corollaries previous impossibility results in the literature:

					\begin{corollary}[Theorem 4, \citet{Yilm10a}]
						There does not exist an SD-IR, SD-efficient, weak SD-strategyproof, and weak SD-envy-free fractional housing market mechanism.
						\end{corollary}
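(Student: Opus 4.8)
The plan is to recognize that this corollary is logically weaker than \thmref{th:imposs} and therefore follows from it with essentially no extra work. The four axioms imposed here --- SD-individual rationality, SD-efficiency, weak SD-strategyproofness, and weak SD-envy-freeness --- contain as a subcollection the three axioms SD-individual rationality, SD-efficiency, and weak SD-strategyproofness that \thmref{th:imposs} already proves to be mutually incompatible. So the whole task is to turn the strengthening established in \thmref{th:imposs} into the (weaker) statement that Yilmaz originally obtained.

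I would argue by contradiction. Suppose a mechanism $f$ satisfying all four axioms existed on the class of fractional housing markets. Discarding the weak SD-envy-freeness clause, $f$ would still be SD-individually rational, SD-efficient, and weak SD-strategyproof. But \thmref{th:imposs} shows that no such mechanism exists --- indeed its witness is a single instance with five agents and five houses that lies in the same single-unit-supply class over which the present corollary is stated --- so evaluating $f$ on that one instance already yields a contradiction. Hence no four-axiom mechanism can exist, which is precisely the claim of the corollary.

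There is essentially no obstacle to overcome: all the mathematical content resides in \thmref{th:imposs}, which is a strengthening of Yilmaz's Theorem~4 obtained by dispensing altogether with the envy-freeness requirement. The only point worth recording is the monotonicity of impossibility under enlarging the axiom set: imposing an additional property can only shrink the set of admissible mechanisms, so an impossibility proved for a smaller axiom collection immediately transfers to any larger collection on the same (or a containing) domain. Since the domain of \thmref{th:imposs} coincides with the fractional-housing-market domain of the corollary, no domain-embedding argument is even needed, and the proof reduces to a single invocation of \thmref{th:imposs}.
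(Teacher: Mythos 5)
Your proposal is correct and matches the paper's intent exactly: the corollary is stated immediately after Theorem~\ref{th:imposs} precisely because adding weak SD-envy-freeness to an already incompatible triple of axioms cannot restore possibility. The paper gives no separate argument, relying on the same monotonicity-of-impossibility observation you make explicit.
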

				
						\begin{corollary}[Theorem 2, \citet{AtSe11a}]
							There does not exist an SD-IR, SD-efficient, weak justified envy-free and weak SD-strategyproof fractional housing market mechanism.
							\end{corollary}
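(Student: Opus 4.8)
The plan is to derive this statement as an immediate logical consequence of Theorem~\ref{th:imposs}, rather than by constructing any new instance. The key observation I would make is that the corollary imposes \emph{four} requirements on a mechanism---SD-individual rationality, SD-efficiency, weak justified envy-freeness, and weak SD-strategyproofness---whereas Theorem~\ref{th:imposs} already rules out the existence of any mechanism satisfying just \emph{three} of these, namely SD-IR, SD-efficiency, and weak SD-strategyproofness.

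First I would note that any mechanism $f$ meeting all four properties stated in the corollary in particular meets the three properties assumed in Theorem~\ref{th:imposs}: discarding the weak justified envy-free requirement can only enlarge the class of admissible mechanisms, never shrink it. Consequently, the class of mechanisms satisfying the four properties of the corollary is a subset of the class of mechanisms satisfying the three properties of Theorem~\ref{th:imposs}. Since the latter class was shown to be empty---the proof of Theorem~\ref{th:imposs} exhibits a single instance with five agents and five houses on which SD-IR together with SD-efficiency pins down a small family of assignments $x$, $y$, $z$, each of which admits a profitable misreport by agent~$1$ or agent~$2$---the former class must be empty as well. This immediately yields the corollary.

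The main thing I expect to verify, and it is the only thing that could conceivably go wrong, is that weak justified envy-freeness is genuinely an \emph{additional} constraint rather than a relaxation of one of the three original properties; this is immediate from its definition, which requires the assignment to satisfy an extra condition and in no way weakens SD-IR, SD-efficiency, or weak SD-strategyproofness. There is therefore no real obstacle: the entire substantive content resides in Theorem~\ref{th:imposs}, and the corollary follows from the elementary fact that impossibility is preserved under strengthening the hypotheses. (The same remark applies to the preceding corollaries: replacing SD-IR by the stronger SD-core stability, or adjoining weak SD-envy-freeness, likewise only narrows the feasible class already ruled out by Theorem~\ref{th:imposs}.)
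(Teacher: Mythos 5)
Your proposal is correct and matches the paper's (implicit) reasoning exactly: the paper states this as a corollary of Theorem~\ref{th:imposs} with no separate proof, relying precisely on the observation that adjoining the extra weak justified envy-freeness requirement can only shrink the class of mechanisms already shown to be empty. Nothing further is needed.
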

				
						\begin{corollary}[Theorem 3, \citet{AtSe11a}]
							There does not exist an SD-IR, SD-efficient, and SD-strategyproof fractional housing market mechanism.
							\end{corollary}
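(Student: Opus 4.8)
The plan is to obtain this statement as an immediate consequence of Theorem~\ref{th:imposs}, using only the implication recorded in the second Remark, namely that SD-strategyproofness implies weak SD-strategyproofness. The key conceptual point is that SD-strategyproofness is the \emph{strongest} incentive requirement among those introduced: it demands that the truthful outcome SD-weakly dominate every outcome reachable by a unilateral misreport, $f(\pref) \pref_i^{SD} f(\pref')$, whereas weak SD-strategyproofness only forbids the \emph{strict} improvement $f(\pref') \spref_i^{SD} f(\pref)$. Thus tightening the incentive hypothesis from weak SD-strategyproofness to SD-strategyproofness shrinks the class of admissible mechanisms, so the impossibility can only persist.

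Concretely, I would argue by contradiction. Suppose some fractional housing market mechanism $f$ is simultaneously SD-individually rational, SD-efficient, and SD-strategyproof. Because SD-strategyproofness implies weak SD-strategyproofness (second Remark), $f$ is in particular weak SD-strategyproof. But then $f$ is an SD-IR, SD-efficient, weak SD-strategyproof mechanism, contradicting Theorem~\ref{th:imposs}, which asserts that no such mechanism exists---indeed already on the restricted domain of single-unit allocations and endowments. Hence no SD-IR, SD-efficient, SD-strategyproof mechanism exists.

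I expect no genuine obstacle, since all the substantive work is carried by Theorem~\ref{th:imposs}; the corollary merely recognizes that its incentive hypothesis is the weaker one. The only detail worth spelling out is why SD-strategyproof $\implies$ weak SD-strategyproof at the level of definitions: if $f(\pref) \pref_i^{SD} f(\pref')$ holds for every misreport $\pref'$, then by antisymmetry of the SD partial order there can be no $\pref'$ with $f(\pref') \spref_i^{SD} f(\pref)$, so $f$ is not SD-manipulable and is therefore weak SD-strategyproof. Since this implication is already stated in the Remark, the proof reduces to a single line invoking Theorem~\ref{th:imposs}.
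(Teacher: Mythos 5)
Your proposal is correct and matches the paper's (implicit) argument: the corollary is stated without proof precisely because it follows immediately from Theorem~\ref{th:imposs} together with the implication SD-strategyproof $\implies$ weak SD-strategyproof recorded in the second Remark. Your contradiction argument and the one-line justification of that implication are exactly what is intended.
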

				
	We remark that the three properties used in Theorem~\ref{th:imposs} are independent from each other. 
SD-efficiency and weak SD-strategypoofness can be simultaneously satisfied by the \emph{multi-unit eating probabilistic serial mechanism}~\citep{Koji09a} if preferences are strict.
SD-individual rationality and weak SD-strategyproofness (even SD-strategyproofness) are satisfied by the mechanism that returns the endowment. 
SD-individual rationality and SD-efficiency can be satisfied by imposing linear constraints for SD-IR and then maximizing sum of utilities that are consistent with the ordinal preferences.



	%



\begin{thebibliography}{17}
\expandafter\ifx\csname natexlab\endcsname\relax\def\natexlab#1{#1}\fi
\expandafter\ifx\csname url\endcsname\relax
  \def\url#1{\texttt{#1}}\fi
\expandafter\ifx\csname urlprefix\endcsname\relax\def\urlprefix{URL }\fi

\bibitem[{Alcalde-Unzu and Molis(2011)}]{AlMo11a}
Alcalde-Unzu, J., Molis, E., 2011. Exchange of indivisible goods and
  indifferences: The top trading absorbing sets mechanisms. Games and Economic
  Behavior 73~(1), 1--16.

\bibitem[{Athanassoglou and Sethuraman(2011)}]{AtSe11a}
Athanassoglou, S., Sethuraman, J., 2011. House allocation with fractional
  endowments. International Journal of Game Theory 40~(3), 481--513.

\bibitem[{Aziz and de~Keijzer(2012)}]{AzKe12a}
Aziz, H., de~Keijzer, B., 2012. Housing markets with indifferences: a tale of
  two mechanisms. In: Proceedings of the 26th AAAI Conference on Artificial
  Intelligence (AAAI). pp. 1249--1255.

\bibitem[{Fujita et~al.(2015)Fujita, Lesca, Sonoda, Todo, and Yokoo}]{FLS+15a}
Fujita, E., Lesca, J., Sonoda, A., Todo, T., Yokoo, M., 2015. A complexity
  approach for core-selecting exchange with multiple indivisible goods under
  lexicographic preferences. In: Proceedings of the 29th AAAI Conference on
  Artificial Intelligence (AAAI). AAAI Press, pp. 907--913.

\bibitem[{Jaramillo and Manjunath(2012)}]{JaMa12a}
Jaramillo, P., Manjunath, V., September 2012. The difference indifference makes
  in strategy-proof allocation of objects. Journal of Economic Theory 147~(5),
  1913--1946.

\bibitem[{Kojima(2009)}]{Koji09a}
Kojima, F., 2009. Random assignment of multiple indivisible objects.
  Mathematical Social Sciences 57~(1), 134--142.

\bibitem[{Konishi et~al.(2001)Konishi, Quint, and Wako}]{KQW01a}
Konishi, H., Quint, T., Wako, J., 2001. On the {S}hapley-{S}carf economy: the
  case of multiple types of indivisible goods. Journal of Mathematical
  Economics 35~(1), 1--15.

\bibitem[{Ma(1994)}]{Ma94a}
Ma, J., 1994. Strategy-proofness and the strict core in a market with
  indivisibilities. International Journal of Game Theory 23~(1), 75--83.

\bibitem[{Manlove(2013)}]{Manl13a}
Manlove, D.~F., 2013. Algorithmics of Matching Under Preferences. World
  Scientific Publishing Company.

\bibitem[{Plaxton(2013)}]{Plax13a}
Plaxton, C.~G., 2013. A simple family of top trading cycles mechanisms for
  housing markets with indifferences. In: Proceedings of the 24th International
  Conference on Game Theory.

\bibitem[{Roth et~al.(2007)Roth, S{\"o}nmez, and {\"U}nver}]{RSU07a}
Roth, A.~E., S{\"o}nmez, T., {\"U}nver, M.~U., 2007. Efficient kidney exchange:
  Coincidence of wants in markets with compatibility-based preferences.
  American Economic Review 97~(3), 828--851.

\bibitem[{Saban and Sethuraman(2013)}]{SeSa13a}
Saban, D., Sethuraman, J., 2013. House allocation with indifferences: a
  generalization and a unified view. In: Proceedings of the 14th ACM Conference
  on Electronic Commerce (ACM-EC). ACM Press, pp. 803--820.

\bibitem[{Shapley and Scarf(1974)}]{ShSc74a}
Shapley, L.~S., Scarf, H., 1974. On cores and indivisibility. Journal of
  Mathematical Economics 1~(1), 23--37.

\bibitem[{S{\"o}nmez and {\"U}nver(2011)}]{SoUn10a}
S{\"o}nmez, T., {\"U}nver, M.~U., 2011. Matching, allocation, and exchange of
  discrete resources. In: Benhabib, J., Jackson, M.~O., Bisin, A. (Eds.),
  Handbook of Social Economics. Vol.~1. Elsevier, Ch.~17, pp. 781--852.

\bibitem[{Sonoda et~al.(2014)Sonoda, Todo, Sun, and Yokoo}]{STSY14a}
Sonoda, A., Todo, T., Sun, H., Yokoo, M., 2014. Two case studies for trading
  multiple indivisible goods with indifferences. In: Proceedings of the 28th
  AAAI Conference on Artificial Intelligence (AAAI). AAAI Press, pp. 791--797.

\bibitem[{Todo et~al.(2014)Todo, Sun, and Yokoo}]{TSY14a}
Todo, T., Sun, H., Yokoo, M., 2014. Strategyproof exchange with multiple
  private endowments. In: Proceedings of the 28th AAAI Conference on Artificial
  Intelligence (AAAI). AAAI Press, pp. 805--811.

\bibitem[{Yilmaz(2010)}]{Yilm10a}
Yilmaz, O., 2010. The probabilistic serial mechanism with private endowments.
  Games and Economic Behavior 69~(2), 475--491.

\end{thebibliography}
%

		\end{document}